\documentclass{article}
\usepackage{ijcai26}

\usepackage{times}
\usepackage{soul}
\usepackage{url}
\usepackage{comment}
\usepackage[hidelinks]{hyperref}
\usepackage[utf8]{inputenc}
\usepackage[small]{caption}

\usepackage[color=green!30]{todonotes}

\usepackage{graphicx}
\usepackage{color}
\usepackage{amsmath,amsthm,amssymb,bm}
\usepackage{booktabs}
\usepackage{algorithm2e}
\usepackage{algorithmic}
\usepackage[switch]{lineno}
\newtheorem{definition}{Definition}

\newtheorem{theorem}{Theorem}
\newtheorem{lemma}{Lemma}
\newtheorem{corollary}[theorem]{Corollary}

\makeatletter
\newcommand{\printfnsymbol}[1]{%
  \textsuperscript{\@fnsymbol{#1}}%
}
\makeatother

\title{Resolving Envy by Adding Goods with Bounded Supply: A Type-Count Dichotomy and Two-Agent Hardness\thanks{This work is supported by the National Science and Technology Council under grant no. NSTC (114)-2927-I-032-501 and NSTC 112-2221-E-032-018-MY3.}}

\author{
Chuang-Chieh Lin$^1$ \and 
Guillaume Fertin$^2$ \and 
Po-An Chen$^{3}$ \and 
St\'{e}phane Vialette$^4$ \and
G\'{e}raldine Jean$^2$ \and
Emile Benoist$^2$ \and 
Colin Cleveland$^5$
\affiliations
$^1$National Taiwan Ocean University, Taiwan.\, $^2$Nantes Universit\'{e}, CNRS, LS2N, UMR 6004, F-44000 Nantes,
France.\, 
$^3$National Yang Ming Chiao Tung University, Taiwan.\, $^4$LIGM, Université Gustave Eiffel, CNRS, 77454 Marne-la-Vallée, France.\,
$^5$King's College London\, Bush House, 30 Aldwych, London WC2B 4BG, United Kingdom
\emails
josephcclin@mail.ntou.edu.tw,
\{Guillaume.Fertin, Geraldine.Jean, Emile.Benoist\}@univ-nantes.fr,
poanchen@nycu.edu.tw, 
stephane.vialette@univ-eiffel.fr,
colin.cleveland@kcl.ac.uk
}

\begin{document}
\setlength{\titlebox}{6.5cm}
\maketitle

\begin{abstract}
We study envy elimination by adding goods (EEAG) when the additional pool has
bounded supply and no separate budget bound.  We establish a sharp type-count
dichotomy for binary additive valuations.  With one additional item type, EEAG
is polynomial-time solvable for any number of agents. More generally, our
algorithm permits arbitrary nonnegative integer per-copy values.  The envy
constraints form a system of difference constraints, and Bellman--Ford returns
the componentwise least feasible extension.  In contrast, with exactly two
additional item types, EEAG is \textsf{NP}-complete even when both types have
positive finite supply and the approvers of one type form a subset of the
approvers of the other.  This closes the two-type case left open by Bentert et
al.  Separately, we prove weak \textsf{NP}-completeness even for two agents with
identical additive valuations, one initially endowed good, and a growing number
of unit-supply item types.  Thus, bounded-supply hardness appears both with two
item-types and many agents and with two agents and many item-types.
\end{abstract}

\section{Introduction}
\label{sec:intro}

Fair division studies how resources can be allocated among agents with
heterogeneous preferences while satisfying fairness and efficiency desiderata.
A central fairness notion is envy-freeness: an allocation is envy-free if no
agent prefers another agent's bundle to her own. For divisible resources,
envy-free allocations are often guaranteed to exist, but for indivisible goods
they may fail to exist even in very small instances. This has motivated a large
literature on relaxations such as EF1, EFX, and MMS, as well as on algorithmic
and complexity-theoretic questions surrounding fair allocation of indivisible
goods. See \cite{Amanatidis2023} for a recent survey.

Many real-world allocation problems, however, do not begin from an empty
allocation. Goods may already have been assigned, either because of legal,
administrative, or practical constraints, and the initial allocation may be
unfair. In such settings, reallocating the existing goods may be undesirable or
impossible. Several approaches have been proposed to repair unfair allocations,
including deleting or donating goods, reallocating goods, sharing goods, and
providing monetary subsidies. A recent approach, introduced by Bentert et
al.~\shortcite{BentertEtAl25}, is to eliminate envy by adding goods. In this
framework, one starts from a fixed initial allocation and is allowed to allocate
additional items from a given pool, subject to supply constraints, with the goal
of obtaining an envy-free extended allocation.

Bentert et al.~\shortcite{BentertEtAl25} introduced this problem as
\emph{Envy Elimination by Adding Goods} (EEAG) and studied the problem with  
both unbounded and bounded resources.  If every additional item type has unbounded supply and the
extension has no budget bound, EEAG is polynomial-time solvable.  On the other hand, 
finite supply changes the picture dramatically. 
They prove \textsf{NP}-hardness even with three additional
item types, binary valuations, and an unbounded budget, and obtain further
parameterized hardness and tractability results.  They explicitly ask whether
their three-type hardness result extends to exactly two additional item types.
Thus, our point of departure is not a general bounded-versus-unbounded
dichotomy, but the finer roles of the number of agents and the number of
bounded item types.

In this paper, we resolve the type-count boundary for binary valuations and
also study the orthogonal boundary obtained by fixing the number of agents.
With one additional item type, we allow any number of agents and arbitrary
nonnegative integer per-copy values.  Although the finite supply couples all
assignments through a global cap, the envy constraints are difference
constraints, and their componentwise least feasible solution is computable in
polynomial time.  With exactly two additional item types, we refine the
\textsc{Clique} reduction of Bentert et al.~\shortcite{BentertEtAl25} by merging
their selected-edge and selected-vertex types.  A new counting argument shows
that the merged supply still forces a clique, yielding \textsf{NP}-completeness
and answering their open question.  Separately, with two agents and identical
additive valuations, we prove weak \textsf{NP}-completeness when one agent
initially owns one good and the additional pool contains many types, each
available once.  This last proof is a direct equivalence with \emph{Equal Sum
Subsets with an Enforced Element} (ESSE)~\cite{CieliebakEtAl08}.

\paragraph{Our contributions.}
\begin{itemize}
    \item We isolate a weakly \textsf{NP}-complete restriction
    of bounded-supply EEAG with two agents, identical valuations, unit supplies,
    one initially endowed good, and no explicit budget bound.

    \item We give an $O(n^3)$ algorithm for bounded-supply
    single-type EEAG with $n$ agents and nonnegative integer per-copy values.
    The difference-constraint argument also applies whenever a repair action is
    summarized by an integer amount per recipient and each observer assigns a
    fixed additive marginal value to one unit.

    \item We prove that bounded-supply EEAG is \textsf{NP}-complete with
    exactly two additional item types, binary additive valuations, and no
    explicit budget bound.  Both supplies are positive, and the approval set
    of one type is contained in that of the other.  Together with the
    single-type algorithm, this establishes a sharp type-count dichotomy for
    binary valuations.
\end{itemize}

\subsection{Related Work}
\label{subsec:related_work}

\paragraph{Fair division of indivisible goods.}
Envy-freeness is one of the classical fairness notions in fair division. For
indivisible goods, exact envy-freeness may fail to exist even in elementary
instances, and deciding whether an envy-free allocation exists is
\textsf{NP}-complete in general~\cite{bouveret2008efficiency}. This has motivated the study of relaxations
such as envy-freeness up to one good (EF1)~\cite{Lipton2004,caragiannis2019unreasonable}, 
envy-freeness up to any good (EFX)~\cite{caragiannis2019unreasonable},
and maximin-share guarantees (MMS)~\cite{Budish2011,kurokawa2016can}. 
EF1 allocations are known to exist for additive
valuations and can be computed by simple procedures such as round-robin or
envy-cycle elimination. EFX is stronger and remains open in full generality,
although it is known to exist in several important special cases. 
Maximum Nash welfare allocations provide another influential approach, since they combine
strong fairness and efficiency guarantees in several settings~\cite{caragiannis2019unreasonable}, 
but exact optimization is computationally difficult.

\paragraph{Repairing unfair initial allocations.}
Several recent works study how to improve or repair an allocation that has
already been implemented. One approach is to remove goods from the allocation:
Boehmer et al.~\shortcite{BoehmerEtAl2024Donating} study whether envy can be
eliminated by donating, equivalently deleting, a bounded number of goods from an
initial allocation. Another approach is to allow limited sharing of resources:
Bredereck et al.~\shortcite{BredereckEtAl2023Sharing} investigate how an allocation
can be improved when some resources may be shared by pairs of agents. A third
approach is to modify the initial allocation by exchanges: Yuen et
al.~\shortcite{YuenEtAl2025Exchanging} study the complexity of reaching an EF1
allocation from an initially unfair allocation via a sequence of item exchanges.
These models are close in spirit to ours because the initial allocation is not
discarded; instead, the goal is to restore fairness through a restricted type of
intervention.

\paragraph{Subsidies, transfers, and adding goods.}
Another line of work restores envy-freeness by introducing a divisible resource,
typically money. Halpern and Shah~\shortcite{HalpernShah2019Subsidy} study when a
given allocation of indivisible goods can be made envy-free by paying subsidies
to agents and how to compute minimum payments. Brustle et
al.~\shortcite{BrustleEtAl2020OneDollar} show that, for additive valuations with
appropriately normalized item values, one dollar per agent is sufficient to
guarantee the existence of an envy-free allocation with subsidies. 
Aziz~\shortcite{Aziz2021Transfers} further studies envy-freeness and equitability with
monetary transfers. In contrast, Bentert et
al.~\shortcite{BentertEtAl25} propose eliminating envy by adding
indivisible goods rather than money. Our work follows this latter direction and
studies the computational boundary between hardness and tractability when the
additional goods have bounded supply.

\paragraph{Envy elimination by adding goods.}
The work most closely related to ours is Bentert et
al.~\shortcite{BentertEtAl25}.  Their unbounded-supply algorithm treats
non-proportional pairs constructively and reduces the remaining proportional
case to a totally unimodular integer program.  When some supplies are finite,
their Proposition~4 gives \textsf{NP}-hardness with three additional item types,
binary valuations, and no budget bound.  Their Proposition~5 gives
\textsf{W[1]}-hardness parameterized by the number of agents even for identical
valuations, while their Observation~6 gives fixed-parameter tractability for
the combined parameter consisting of the numbers of agents and additional item
types.  For a bounded extension size, their Theorem~7 gives
\textsf{W[1]}-hardness parameterized by the budget.  Our two-type theorem
sharpens their Proposition~4 by merging the type used for selected vertices
with the common type used for selected edges.  The proof replaces their
separate per-type counting with a coupled edge--endpoint inequality and thereby
resolves their explicit two-type question.  Together with our single-type
algorithm, it gives a polynomial-versus-\textsf{NP}-complete dichotomy for one
and two additional types under binary valuations.  Our other hardness result
is orthogonal in the sense that it fixes the number of agents at two but permits 
many one-copy types.

\paragraph{Organization.}
Sect.~\ref{sec:preliminaries} introduces the model and recalls the ESSE problem.
Sect.~\ref{sec:EEAG-TwoAgents} proves the two-agent hardness result, and
Sect.~\ref{sec:EEAG-OneItem} gives the single-type algorithm.
Sect.~\ref{sec:EEAG-TwoTypes} proves two-type hardness and derives the
type-count dichotomy. Finally, Sect.~\ref{sec:conclusion} concludes with open
directions.

\section{Preliminaries}
\label{sec:preliminaries}

\subsection{EEAG, Supply, and Budget}

Let $\mathcal A=\{a_1,a_2,\ldots,a_n\}$ be the agents, let $P$ be
the set of initially allocated goods, and let $R$ be the set of additional item
types.  An initial allocation $\sigma$ assigns pairwise disjoint bundles
$\sigma(a_i)\subseteq P$.  Each type $r\in R$ has supply
$s(r)\in\mathbb Z_{\geq 0}\cup\{\infty\}$, and each agent has a nonnegative
additive valuation $u_i$.  An extension is a vector
$\rho:\mathcal A\times R\to\mathbb Z_{\geq0}$ satisfying
$\sum_{i=1}^n\rho(a_i,r)\leq s(r)$ for every $r\in R$. 
Physically, $\rho(a_i,r)$ is the number of copies of
additional item type~$r$ assigned to agent~$a_i$. The initial bundles
$\sigma(a_i)$ remain unchanged, and the extension only augments them with
additional goods. The supply inequality ensures that the total number of
allocated copies of each type~$r$ does not exceed its available supply~$s(r)$.
Unused copies may remain unallocated. 
After extension $\rho$, agent $a_i$'s value for $a_j$'s bundle is 
\[
u_i(\sigma(a_j))+\sum_{r\in R}\rho(a_j,r)u_i(r).
\]
The extension is \emph{envy-resolving} if this value is at most
$a_i$'s value for her own extended bundle for every ordered pair~$(i,j)$.
A valuation $u_i$ is \emph{binary} if
$u_i(p)\in\{0,1\}$ for every initially allocated good $p\in P$
and $u_i(r)\in\{0,1\}$ for every additional item type $r\in R$.

A \emph{bounded-supply} instance has $s(r)<\infty$ for at
least one type.  This is distinct from imposing a \emph{budget} $L$ requiring
$\sum_{i,r}\rho(a_i,r)\leq L$.  Our hardness result has finite
unit supplies and no separate budget bound.  In a \emph{single-type} instance,
$R=\{g\}$. If $s(g)=K<\infty$, the supply constraint is simply
$\sum_{i=1}^n\rho(a_i,g)\leq K$.

\subsection{Equal Sum Subsets with an Enforced Element}

We use the following indexed formulation so that equal-valued
input elements remain distinct.

\begin{definition}[ESS with Enforced Element (ESSE)]
Given positive integers $q_1,q_2,\ldots,q_m$, with~$q_m$
distinguished, the \emph{Equal Sum Subsets with an Enforced Element} problem
asks whether there are disjoint index sets~$X,Y\subseteq[m]$ such that~$m\in X$ 
and~$\sum_{t\in X}q_t=\sum_{t\in Y}q_t$.
\end{definition}
\begin{theorem}[Cieliebak et al.~\shortcite{CieliebakEtAl08}, Theorem~2]
\label{thm:ESSE-NPC}
ESSE is \textsf{NP}-complete.
\end{theorem}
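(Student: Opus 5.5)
This statement is quoted from Cieliebak et al.~\shortcite{CieliebakEtAl08}, so the paper can simply cite it. If a self-contained argument is wanted, the plan is as follows.

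\emph{Membership in} \textsf{NP}. A certificate is the pair of disjoint index sets $X,Y\subseteq[m]$. Checking disjointness, checking $m\in X$, and comparing the two sums takes time polynomial in the binary encoding of the input.

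\emph{Hardness.} I would reduce from \textsc{Partition}. An instance consists of positive integers $b_1,\ldots,b_k$ with total sum $2B$, and asks whether some subset sums to exactly $B$. We may assume every $b_i\le B$, since otherwise the instance is trivially negative.

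The construction has two parts.
\begin{itemize}
\item Create scaled copies $q_i=c\,b_i$ for $i\le k$, with a scaling factor $c$ to be fixed below.
\item Add two "anchor" elements. The distinguished element is $q_m=cB+1$, and a companion element is $q_{m-1}=1$.
\end{itemize}
Take $c=2k+4$ or any larger value, so that $c$ exceeds the total number of elements; then the small additive offsets cannot be traded against multiples of $c$.

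\emph{Correctness.} Reducing any equality $\sum_X q=\sum_Y q$ modulo $c$ forces the unit offsets to balance. Since $q_m\in X$ contributes $+1$, the companion $q_{m-1}$ must lie in $Y$. Cancelling the anchors then leaves the equation
\[
B+\sum_{i\in X'}b_i=\sum_{i\in Y'}b_i ,
\]
where $X'$ and $Y'$ are the ordinary elements placed in $X$ and $Y$. This does not yet force a partition, because $X'$ and $Y'$ need not cover all of $[k]$. To close that gap, I would change the distinguished element to $q_m=c\cdot 2B+1$, so that balancing it requires $Y'$ to carry at least the extra weight $2B$. Alternatively, I would reduce from a variant where equal sums with a prescribed offset encode subset sum: choose the enforced element $q_m=cT+1$ with companion $1$, so that $Y'$ minus $X'$ must total exactly $T$. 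This is \textsc{Subset Sum} with signed coefficients. Signs can be removed by the standard trick of giving each item a pair of elements that must be used together, enforced by distinct high-order digit blocks.

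\emph{Main obstacle.} The hard part is this digit-block gadgetry. Each original number must be represented so that any equal-sum solution containing the enforced element decodes to a genuine subset-sum or partition certificate. Spurious solutions that avoid the intended structure must be excluded, which requires choosing the base large enough that no carries occur. This is exactly what Cieliebak et al.\ carry out, so in the paper I would defer to their Theorem~2 rather than reproduce it.
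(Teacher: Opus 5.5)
Deferring to Cieliebak et al.\ is exactly what the paper does, since it states this theorem purely as a citation, and your \textsf{NP}-membership argument is fine. Your self-contained hardness sketch, however, does not work as written.

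With $q_m=cB+1$, your modular argument correctly yields $B+\sum_{i\in X'}b_i=\sum_{i\in Y'}b_i$. But the defect is not that $X'\cup Y'$ may miss indices. The defect is that ordinary elements may sit on the enforced side, so the $b_i$ effectively carry signs in $\{-1,0,1\}$. For instance, $b=(1,3)$ with $B=2$ has no partition, yet $X'=\{1\}$, $Y'=\{2\}$ works (with $c=8$: $17+8=1+24$).

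Your first repair, $q_m=2cB+1$, makes matters worse. Since $\sum_{Y'}b_i\le 2B$, it forces $X'=\emptyset$ and $Y'=[k]$. Then $X=\{m\}$, $Y=[k]\cup\{m-1\}$ is a solution for \emph{every} \textsc{Partition} instance, so all instances map to yes-instances.

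Your second route is circular. ESSE is precisely signed subset sum with target $q_m$, so the hardness must come from the sign-removal gadget, which you leave unspecified. Moreover, ``a pair of elements that must be used together'' is not the constraint you need.

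The missing idea is a private high-order digit per item on which the enforced element carries a~$1$. Start from \textsc{Subset Sum} with $a_1,\ldots,a_n$, target $T$, and $A=\sum_i a_i$. Choose $N>A+T+3$. For each $i$, create $p_i=N^i+a_i$ and $r_i=N^i$, and let the enforced element be $q_m=T+\sum_{i=1}^{n}N^i$.

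On both sides of $q_m+\sum_{X'}q=\sum_{Y'}q$, every base-$N$ digit stays below $N$, so the equality holds digitwise. Let $x_i,y_i$ count the members of $\{p_i,r_i\}$ in $X'$ and $Y'$. At digit $i\ge 1$ the equality reads $1+x_i=y_i$, and $x_i+y_i\le 2$ forces $x_i=0$ and $y_i=1$. So nothing joins the enforced side, and exactly one of $p_i,r_i$ lies in $Y$. Digit~$0$ then gives $\sum_{i:\,p_i\in Y}a_i=T$.

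The converse is immediate: take $X=\{m\}$ and $Y=\{p_i:i\in S\}\cup\{r_i:i\notin S\}$. No anchor, companion, or modulus is needed. If you want a self-contained argument, this is the one to give; otherwise the citation alone suffices, as in the paper.
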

Cieliebak et al.~\shortcite{CieliebakEtAl08} also give a
pseudo-polynomial dynamic program running in $O(m^2Q)$ time, where
$Q=\sum_{t=1}^m q_t$. Hence, ESSE is weakly \textsf{NP}-complete.

\section{Two-Agent Hardness with Unit Supplies}
\label{sec:EEAG-TwoAgents}

In this section, we restrict EEAG to two agents, identical valuations, one
initially endowed good, and unit supply for every additional type.  Although the
reduction below is deliberately direct, it identifies hardness in a direction
not covered by fixing the number of additional types.

\begin{definition}[Restricted two-agent bounded-supply EEAG]
\label{def:EF-init}
There are two agents $a_1,a_2$ with the same nonnegative
additive valuation $u$.  Initially, $a_1$ owns one good $p$ with $u(p)=h$,
whereas $a_2$ owns nothing.  The additional goods are
$G=\{g_1,g_2,\ldots,g_\ell\}$, each available once, with
$u(g_t)=w_t$.  An extension is specified by disjoint index sets
$I_1,I_2\subseteq[\ell]$, while goods outside~$I_1\cup I_2$ may remain unallocated.
Because the valuations are identical, the extension is envy-free exactly when
\[
h+\sum_{t\in I_1}w_t=\sum_{t\in I_2}w_t.
\]
The decision question is whether such $I_1,I_2$ exist.
\end{definition}

\begin{theorem}
\label{thm:EF-init-NPC}
The restricted two-agent bounded-supply case in
Definition~\ref{def:EF-init} is weakly \textsf{NP}-complete, even when $h$ and
all $w_t$ are positive integers encoded in binary.
\end{theorem}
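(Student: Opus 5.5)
We prove membership in \textsf{NP} and then show \textsf{NP}-hardness by a direct reduction from ESSE (Theorem~\ref{thm:ESSE-NPC}). The reduction preserves numbers up to relabeling. Weakness then follows from a pseudo-polynomial algorithm.

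\textbf{Membership.} A certificate is the pair of disjoint index sets $I_1,I_2\subseteq[\ell]$. We verify it by checking disjointness and the single equation
\[
h+\sum_{t\in I_1}w_t=\sum_{t\in I_2}w_t .
\]
This takes time polynomial in the binary encoding. By the discussion in Definition~\ref{def:EF-init}, this equation is exactly envy-freeness for identical valuations: each agent values both extended bundles the same way, so envy-freeness in both directions forces equality of the two bundle values.

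\textbf{Reduction.} Given an ESSE instance $q_1,\ldots,q_m$ with distinguished $q_m$, set $h=q_m$ and $\ell=m-1$, with $w_t=q_t$ for $t\in[m-1]$. The instance has two agents, $a_1$ owning the single good $p$ with $u(p)=h$, and unit-supply goods $g_1,\ldots,g_{m-1}$. All values are positive integers written in binary, so the construction is polynomial.

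For the forward direction, suppose $(X,Y)$ solves ESSE. Put $I_1=X\setminus\{m\}$ and $I_2=Y$. These are disjoint subsets of $[m-1]$, since $m\notin Y$ because $X$ and $Y$ are disjoint. The ESSE equality becomes exactly $h+\sum_{I_1}w_t=\sum_{I_2}w_t$.

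For the converse, given $I_1,I_2$, set $X=I_1\cup\{m\}$ and $Y=I_2$. These are disjoint and contain $m$ in $X$, and the equality transfers verbatim. The index-set formulation of ESSE ensures that equal values cause no ambiguity. Note that $Y$ is automatically nonempty because $h>0$, consistent with the ESSE requirement.

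\textbf{Weakness.} We need a pseudo-polynomial algorithm. One option is a dynamic program over $t$ and the signed difference $d=\sum_{I_2}w_t-\sum_{I_1}w_t$, with $d\in[-W,W]$ where $W=\sum_t w_t$. Each good is placed in $I_1$, in $I_2$, or left out, and we accept if $d=h$ is reachable. This runs in $O(\ell W)$ time. Alternatively, we can map back to ESSE via the inverse construction (the reduction is essentially a bijection) and invoke the $O(m^2Q)$ algorithm of Cieliebak et al.

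I do not expect a real obstacle. The only point needing care is the bookkeeping that the enforced element corresponds exactly to the initially endowed good, and that unallocated goods correspond to elements in neither $X$ nor $Y$.
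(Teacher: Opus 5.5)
Your proof is correct and matches the paper's argument: the same certificate check, the same reduction $h:=q_m$, $w_t:=q_t$ with $I_1=X\setminus\{m\}$, $I_2=Y$ and its inverse, and the same weakness argument by mapping back to ESSE on the list $w_1,\ldots,w_\ell,h$ with the last element enforced. Your alternative signed-difference dynamic program is also valid and replaces the citation of Cieliebak et al.\ with a self-contained $O(\ell W)$ algorithm, where $W=\sum_t w_t$, slightly better than the paper's $O((\ell+1)^2(h+W))$ bound.
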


\begin{proof}
A certificate consists of $I_1$ and $I_2$.  Their disjointness
and the equality in Definition~\ref{def:EF-init} can be verified in polynomial
time, so the problem belongs to \textsf{NP}.

For hardness, take an arbitrary ESSE instance
$q_1,q_2,\ldots,q_m$ whose enforced element is~$q_m$.  Create the initial good $p$
with value $h:=q_m$ and, for every $t\in[m-1]$, one additional good $g_t$
with value $w_t:=q_t$.  All these values are common to both agents, and every
additional good has supply one.  The construction is polynomial in the binary
encoding of the ESSE instance.

Suppose first that disjoint index sets $X,Y\subseteq[m]$
witness ESSE, with $m\in X$.  Put $I_1:=X\setminus\{m\}$ and $I_2:=Y$.
Then 
\[
h+\sum_{t\in I_1}w_t
=q_m+\sum_{t\in X\setminus\{m\}}q_t
=\sum_{t\in Y}q_t
=\sum_{t\in I_2}w_t, 
\]
so the extension is envy-free.  Conversely, if disjoint
$I_1,I_2\subseteq[m-1]$ satisfy the displayed equality in
Definition~\ref{def:EF-init}, then $X:=I_1\cup\{m\}$ and $Y:=I_2$ are
disjoint, contain the enforced index in $X$, and have equal sums.  Hence the
two instances are equivalent, proving \textsf{NP}-hardness.

It remains to justify the qualifier ``weakly.''  Finally, we show that the equivalence
also works from any instance of Definition~\ref{def:EF-init} back to ESSE.
Use the list $w_1,w_2,\ldots,w_\ell,h$ and enforce its last element.  Therefore the
pseudo-polynomial ESSE algorithm of Cieliebak et al.~\cite{CieliebakEtAl08}
decides our instance in $O((\ell+1)^2(h+\sum_{t=1}^{\ell} w_t))$ time.  Together with
\textsf{NP}-completeness, this proves weak \textsf{NP}-completeness.
\end{proof}

\paragraph{Remark.} Our hardness result imposes no separate budget on the total number of
additional goods. Any subset of the $\ell$ available goods, including all of
them, may be allocated. The only supply restriction is that each additional
good is available once. This result is orthogonal to the type-count dichotomy
proved in Sect.~\ref{sec:EEAG-TwoTypes}, where the number of additional types
is fixed at two while the number of agents may grow. Here, the number of agents
is fixed at two while the number of unit-supply types may grow.

\section{Single-Type EEAG with Bounded Supply}
\label{sec:EEAG-OneItem}

In this section, we consider EEAG by adding extra copies of a \emph{single} good type. 
Let $\mathcal A=\{a_1,a_2,\ldots,a_n\}$ be the agents and let~$B_i$ be the initial bundle of~$a_i$.  
For every ordered pair~$(i,j)$, define 
\[
U_i^{\text{self}}:=u_i(B_i)\quad\text{ and }\quad
U_i^{(j)}:=u_i(B_j).
\]
We need no further structure on the values of the original
bundles. That is, the matrix consisting of these $n^2$ numbers is sufficient.  This
statement concerns only the original bundles.  For the added goods, we
explicitly assume additive and separable utilities as follows.  There is one
additional item type $g$ with finite supply $K$, and agent $a_i$ assigns value
$v_i\in\mathbb Z_{\geq0}$ to each copy.  Hence, if $a_j$ receives $x_j$
copies, then $a_i$ values $a_j$'s final bundle at
$U_i^{(j)}+v_i x_j$.  All input numbers are nonnegative integers encoded in
binary.

\begin{definition}[Bounded-supply single-type EEAG]
\label{def:single-type}
A vector $x\in\mathbb Z_{\geq0}^n$ is an envy-resolving
extension if $\sum_{i=1}^n x_i\leq K$ and, for every ordered pair
$(i,j)$, 
\[
U_i^{\text{self}}+v_i x_i \;\geq\; U_i^{(j)}+v_i x_j.
\]
\end{definition}

\paragraph{Difference constraint formulation.} 
Put $\Delta_{ij}:=U_i^{(j)}-U_i^{\text{self}}$.  If $v_i=0$, the
constraint for $(i,j)$ is feasible exactly when $\Delta_{ij}\leq0$.
If $v_i>0$, integrality of $x_i-x_j$ makes the same constraint equivalent to 
\begin{equation}
\label{eq:single-difference}
x_i-x_j\geq c_{ij},
\quad
c_{ij}:=\left\lceil\frac{\Delta_{ij}}{v_i}\right\rceil.
\end{equation}
Importantly,~\eqref{eq:single-difference} is retained even when
$\Delta_{ij}\leq 0$: assigning sufficiently many copies to~$a_j$ can create
new envy from~$a_i$ to~$a_j$.

\paragraph{Why a natural greedy heuristic can fail.}
Consider the rule that repeatedly assigns the next copy to an
agent whose current own-bundle utility
$U_i^{\text{self}}+v_i x_i$ is smallest.  Let there be two agents and
$K=2$, with
\begin{align*}
&U_1^{\text{self}}=5,\quad U_1^{(2)}=0,\quad v_1=1, \\
\quad
&U_2^{\text{self}}=8,\quad U_2^{(1)}=20,\quad v_2=10.
\end{align*}
The agents' own-bundle utilities evolve as 
$(5,8)\to(6,8)\to(7,8)$, so the heuristic assigns both copies to $a_1$.
The resulting vector $(2,0)$ is not envy-resolving: agent $a_2$ values her
own final bundle at~$8$, but values~$a_1$'s final bundle at
$20+10\cdot2=40$.  Nevertheless, the vector $(0,2)$ is envy-resolving,
because $a_1$ compares $5$ with~$0+1\cdot 2 = 2$, while~$a_2$ compares
$8+10\cdot2=28$ with~$20$.  Thus, locally balancing the agents'
own-bundle utilities can move the allocation in the wrong direction since 
assigning a copy to one agent changes how every observer evaluates that
agent's bundle.  This motivates treating all envy inequalities jointly
through difference constraints.

\subsection{Warm-up: Binary Per-Copy Values}
\label{subsec:binary-warmup}

We first specialize to $v_i\in\{0,1\}$.  This case is
subsumed by the general theorem in the next subsection, nevertheless, it is useful as 
a warm-up. It isolates the minimum-copy formulation and its integrality before we
turn to the more constructive shortest-path argument.  If $v_i=0$ and
$\Delta_{ij}>0$ for some~$j$, the instance is infeasible.  Otherwise, every
agent with $v_i=1$ contributes the constraints 
\begin{equation}
\label{eq:binary-difference}
x_i-x_j\geq\Delta_{ij}\quad\text{for every }j\in[n].
\end{equation}
As above, constraints with $\Delta_{ij}\leq 0$ must not be
discarded, since added copies can create envy that was absent initially.

\begin{lemma}
\label{lem:TU-difference}
Let $e_i\in\mathbb R^n$ be the $i$-th standard unit vector in which the $i$-th 
coordinate is 1 and other coordinates are~$0$. Let $A$ be any matrix whose rows 
are of the following two types: 
\begin{enumerate}
    \item rows of the form~$e_i-e_j$, corresponding to difference constraints $x_i-x_j\geq c_{ij}$, 
    and 
    \item rows of the form~$e_i$, corresponding to non-negativity constraints $x_i\geq 0$.
\end{enumerate}
Then $A$ is totally unimodular. That is, every square submatrix of~$A$ has determinant in~$\{-1, 0, 1\}$.
\end{lemma}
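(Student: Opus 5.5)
I will prove the claim by induction on the order $k$ of a square submatrix $M$ of $A$, using only the row structure. Every row of $A$ is either $e_i-e_j$ with $i\neq j$, or $e_i$. A square submatrix keeps a subset of rows and a subset of columns, so each row of $M$ is the restriction of such a row. Hence each row of $M$ has at most two nonzero entries. If there are two, one is $+1$ and the other is $-1$. If there is one, it is $\pm1$: restricting $e_i-e_j$ may drop the $+1$ column and leave a lone $-1$. I will therefore prove the slightly stronger statement for the class $\mathcal C$ of matrices whose rows all have this form. The class $\mathcal C$ is closed under taking submatrices, which makes the induction go through.

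The base case $k=1$ is immediate, since every entry is in $\{-1,0,1\}$. For $k\geq 2$ I split into three cases.
\begin{enumerate}
\item If $M$ has a zero row, then $\det M=0$.
\item If some row of $M$ has exactly one nonzero entry $\pm1$, I expand the determinant along that row. This gives $\det M=\pm\det M'$, where $M'$ is a $(k-1)\times(k-1)$ submatrix of $M$ and so lies in $\mathcal C$. By induction $\det M'\in\{-1,0,1\}$.
\item Otherwise every row of $M$ contains exactly one $+1$ and exactly one $-1$. Then the sum of all columns of $M$ is the zero vector, so the columns are linearly dependent and $\det M=0$.
\end{enumerate}
In all three cases $\det M\in\{-1,0,1\}$, which closes the induction.

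Equivalently, one could note that $A^{\top}$ has at most one $+1$ and at most one $-1$ per column. This is a submatrix of the incidence matrix of a directed graph with one extra identity block, and the standard Poincaré/Heller--Tompkins criterion applies. I prefer the self-contained induction above.

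The only step that needs care is the observation that restricting $e_i-e_j$ to a column subset may leave a single $-1$ rather than a single $+1$. This is why the induction must run over the class $\mathcal C$ and not over $A$'s row types literally. Once $\mathcal C$ is defined this way, the cofactor-expansion case and the zero-column-sum case are routine.
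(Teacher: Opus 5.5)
Your proof is correct and follows the paper's argument: induction on the order of the submatrix, with a zero row giving determinant $0$, a row with a single $\pm1$ handled by cofactor expansion, and otherwise every row containing one $+1$ and one $-1$, so that $M(1,\ldots,1)^{\top}=0$. The auxiliary class $\mathcal C$ does no harm but is not needed. The paper simply inducts over all square submatrices of $A$, a family already closed under taking submatrices, so the lone $-1$ case needs no special treatment.
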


\begin{proof}
We prove the claim directly by induction on the order of a
square submatrix.  Let $B$ be any $k\times k$ submatrix of $A$.  The claim is
immediate for $k=1$, since every entry of $A$ belongs to
$\{-1,0,1\}$.

Suppose first that some row of~$B$ contains at most one
nonzero entry.  If that row is zero, then $\det(B)=0$.  Otherwise, its unique
nonzero entry is either~$1$ or~$-1$, and cofactor expansion along that row
expresses $\det(B)$, up to sign, as the determinant of a
$(k-1)\times(k-1)$ submatrix of~$A$.  By the induction hypothesis, this
determinant belongs to~$\{-1,0,1\}$.

It remains to consider the case in which every row of $B$
contains at least two nonzero entries.  Since every row of $A$ contains at
most two nonzero entries, every row of~$B$ must contain exactly two.
Consequently, $B$ contains no row arising from a unit vector~$e_i$, and each
of its rows arising from $e_i-e_j$ retains both its~$1$ and its~$-1$.
Therefore every row of~$B$ sums to zero, so
$B(1,1,\ldots,1)^{\top} = (0, 0, \ldots, 0)^{\top}$.  Hence $B$ is singular and $\det(B)=0$.
Thus, every square submatrix of~$A$ has determinant in
$\{-1,0,1\}$, proving that~$A$ is totally unimodular.
\end{proof}

\begin{theorem}
\label{thm:binary-case}
When $v_i\in\{0,1\}$ for every agent, bounded-supply
single-type EEAG is solvable in polynomial time.
\end{theorem}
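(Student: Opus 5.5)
We follow the warm-up route the section sets up: write the problem as a minimum-copy integer program and use Lemma~\ref{lem:TU-difference} to solve its LP relaxation exactly.

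\textbf{Preprocessing.} First, for every agent $a_i$ with $v_i=0$, we check that $\Delta_{ij}\le 0$ for all $j$. If some such agent has $\Delta_{ij}>0$, we reject, because no extension can change how $a_i$ compares bundles. Otherwise these agents impose no constraints on $x$.

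\textbf{The integer program.} For the remaining agents, with $v_i=1$, the envy constraints are exactly the difference constraints~\eqref{eq:binary-difference}, and we keep all of them, including those with $\Delta_{ij}\le 0$. Consider
\[
\min\ \sum_{i=1}^n x_i \quad\text{s.t.}\quad x_i-x_j\ge \Delta_{ij}\ \ (v_i=1,\ j\in[n]),\qquad x_i\ge 0 .
\]
An envy-resolving extension exists iff this program has an integral feasible solution with objective at most $K$. For the ``if'' direction, any integral feasible $x$ with $\sum_i x_i\le K$ satisfies the supply bound and all envy constraints. For the ``only if'' direction, any envy-resolving extension is itself such a solution.

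\textbf{Integrality.} The constraint matrix consists only of rows $e_i-e_j$ and $e_i$, so Lemma~\ref{lem:TU-difference} makes it totally unimodular. The right-hand sides $\Delta_{ij}$ and $0$ are integers, so every vertex of the feasible polyhedron is integral. The polyhedron contains no line, since $x\ge 0$, and the objective is bounded below by $0$. Hence whenever the LP is feasible, its optimum is attained at an integral vertex.

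\textbf{Algorithm.} We therefore solve the LP in polynomial time, using the ellipsoid method or an interior-point method; encoding sizes are polynomial because the entries are in $\{-1,0,1\}$ and the $\Delta_{ij}$ are given in binary. We then take an optimal vertex, rounding to the nearest integral vertex if the solver returns a non-vertex optimum. We accept iff the LP is feasible and its optimal value is at most $K$. Correctness follows from the equivalence above: the integral optimum equals the LP optimum, so the minimum number of copies needed is at most $K$ exactly when some integral extension fits within the supply.

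\textbf{Main obstacle.} The delicate point is the reduction itself rather than the algebra. We must keep the constraints with nonpositive $\Delta_{ij}$, since giving copies to $a_j$ can create new envy. We must also ensure that feasibility of the LP implies a bounded, attained optimum at an integral vertex, which is the pointedness and lower-bound argument above.

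As a more constructive alternative that previews the general case, one could also note that the componentwise least feasible $x$ is given by longest paths in the constraint graph, and detect infeasibility as a positive cycle. This proof, however, relies only on total unimodularity.
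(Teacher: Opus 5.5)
Your proposal is correct and follows essentially the same route as the paper: reject pairs with $v_i=0$ and $\Delta_{ij}>0$, solve the minimum-copy LP whose constraint matrix is totally unimodular by Lemma~\ref{lem:TU-difference}, and compare its integral optimum with $K$. Your pointedness and boundedness remark justifies attainment at an integral vertex, which the paper leaves implicit, and the phrase ``rounding to the nearest integral vertex'' is unnecessary, since the decision needs only the optimal value and a vertex optimum can be recovered from any optimal solution in polynomial time by standard means.
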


\begin{proof}
After rejecting any pair with $v_i=0$ and
$\Delta_{ij}>0$, consider the linear program (LP)
\begin{equation}
\label{lp:binary-warmup}
\begin{array}{ll}
\text{minimize} & \displaystyle\sum_{i=1}^n x_i\\[1mm]
\text{subject to} & x_i-x_j\geq\Delta_{ij}
    \quad \forall(i,j)\text{ with }v_i=1,\\
& x_i\geq 0 \quad \forall i\in[n].
\end{array}
\end{equation}
Its constraints are exactly the remaining envy constraints
and nonnegativity of variables. The supply bound is deliberately omitted from the LP and
checked against its minimum objective value.  By
Lemma~\ref{lem:TU-difference}, the constraint matrix is totally unimodular.
All right-hand sides are integers, so whenever the LP is feasible it has an
integral optimal extreme point~$\mathbf{x}^* := (x^*_1,x^*_2,\ldots,x^*_n)$.

If the LP is infeasible, no envy-resolving extension exists.
If $\sum_i x_i^*\leq K$, assigning $x_i^*$ copies to $a_i$ is a feasible
extension within supply.  Conversely, every envy-resolving extension is an
integer feasible solution of~\eqref{lp:binary-warmup}. Hence, if
$\sum_{i=1}^n x_i^*>K$, no such extension can use at most~$K$ copies.  Linear
programming therefore decides the binary case and constructs a solution in
polynomial time.
\end{proof}

The warm-up highlights two points that remain important as follows. 
All ordered-pair constraints must be retained, and the supply cap should be
compared with a minimum-size feasible extension.  The same totally-unimodular structure 
remains after replacing~$\Delta_{ij}$ by~$c_{ij}$ for general~$v_i>0$. Nevertheless,
we use Bellman--Ford next because it directly returns the componentwise least
extension and makes the sign convention explicit.

\subsection{General Nonnegative Integer Per-Copy Values}
\label{subsec:general-values}

\begin{theorem}
\label{thm:one-item-type-general-value}
Bounded-supply single-type EEAG can be decided, and an
envy-resolving extension can be constructed when one exists, in polynomial-time.
\end{theorem}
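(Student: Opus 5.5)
We follow the difference-constraint formulation~\eqref{eq:single-difference}. First, we scan all ordered pairs. If some pair $(i,j)$ has $v_i=0$ and $\Delta_{ij}>0$, we reject: agent $a_i$'s comparison is then unaffected by any extension and is violated. Pairs with $v_i=0$ and $\Delta_{ij}\le 0$ hold for every $x$ and are dropped. Every pair with $v_i>0$ is replaced by the equivalent integer constraint $x_i-x_j\ge c_{ij}$, with $c_{ij}=\lceil\Delta_{ij}/v_i\rceil$. These ceilings are computable in polynomial time in the binary encoding. Together with $x_i\ge 0$, we obtain a system of difference constraints in which every right-hand side is an integer.

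To encode this system, we add an auxiliary variable $x_0$ standing for the value $0$. Each constraint $x_i-x_j\ge c_{ij}$ becomes $x_j\le x_i-c_{ij}$, and each $x_i\ge 0$ becomes $x_0\le x_i-0$. To get the componentwise \emph{least} solution, we negate: with $y_i:=-x_i$, the constraints read $y_i\le y_j-c_{ij}$ and $y_i\le y_0$. Fix $y_0=0$. We build a graph with an arc $j\to i$ of weight $-c_{ij}$ and an arc $0\to i$ of weight $0$, and run Bellman--Ford from vertex $0$ in $O(n\cdot n^2)=O(n^3)$ time, since there are $O(n^2)$ arcs. The standard theory gives the following dichotomy.
\begin{enumerate}
\item If there is a negative cycle, the system is infeasible, so no envy-resolving extension exists, whatever $K$ is.
\item Otherwise, the distances $d_i$ give $y_i=d_i$, which is the componentwise greatest solution of the $y$-system with $y_0=0$.
\end{enumerate}
Hence $x^*_i:=-d_i$ is integral, nonnegative, feasible, and satisfies $x^*\le x$ componentwise for every feasible $x$. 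Equivalently, $x^*_i$ is the maximum over paths from $0$ to $i$ of the sum of the $c$-weights.

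Finally, we compare with the supply. Every envy-resolving extension $x$ satisfies $x\ge x^*$, so $\sum_i x_i\ge\sum_i x^*_i$. Therefore, if $\sum_i x^*_i>K$, no extension fits within the supply. If $\sum_i x^*_i\le K$, then $x^*$ itself is an envy-resolving extension and we output it.

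The main obstacle is the least-solution claim, and we would prove it directly rather than cite it. For any feasible $x$ and any path $0=i_0,i_1,\dots,i_k=i$ in the constraint graph, summing the constraints along the path gives $x_i\ge$ (path weight). Hence $x_i\ge x^*_i$. Conversely, the shortest-path triangle inequalities show that $x^*$ satisfies every constraint. The absence of a negative cycle in the $y$-graph, equivalently of a positive cycle in the $c$-weights, guarantees that these maximum path weights are finite. A positive cycle makes the system infeasible, because summing its constraints yields $0\ge$ a positive number. We also note that the numbers stay polynomially bounded in bit length, since each $x^*_i$ is a sum of at most $n$ of the values $c_{ij}$.
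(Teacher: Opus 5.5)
Your proposal is correct and follows essentially the same route as the paper. You reject pairs with $v_i=0$ and $\Delta_{ij}>0$, rewrite the rest as $x_i-x_j\ge c_{ij}$, negate to $y_i=-x_i$, and run Bellman--Ford from an auxiliary source with arcs $(j,i)$ of weight $-c_{ij}$ and $(s,i)$ of weight $0$; path summation then shows $x^*=-d$ is the componentwise least feasible vector, so comparing $\sum_i x_i^*$ with $K$ decides the instance. Your justification of polynomial bit length (each $x_i^*$ is a sum of at most $n$ values $c_{ij}$) makes explicit what the paper only asserts.
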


\begin{proof}
First, we note that if some ordered pair has $v_i=0$ and $\Delta_{ij}>0$, no
extension can remove that envy, so we can simply reject for this case.  
Otherwise, the complete system
apart from the supply cap consists of~\eqref{eq:single-difference} for all
pairs with $v_i>0$, together with $x_i\geq0$.  These are lower-bound
difference constraints.

To put them in the standard shortest-path sign convention,
set $y_i:=-x_i$.  Then 
\[
x_i-x_j\geq c_{ij}\quad\text{ if and only if }\quad
y_i\leq y_j-c_{ij},
\]
while $x_i\geq 0$ becomes $y_i\leq 0$.  Construct a directed
graph~$H$ on vertices~$\{s,1,2,\ldots,n\}$ in which~$s$ is an auxiliary 
vertex as the \emph{source}.  For every constraint indexed by~$(i,j)$ with~$v_i>0$, 
add the edge $(j,i)$ of weight~$-c_{ij}$. For every~$i$, add the
edge~$(s,i)$ of weight~$0$.  Fixing $y_s=0$, an edge~$(p,q)$ of weight~$w$
represents the upper bound $y_q\leq y_p+w$.  This explains both the edge
orientation and the minus sign.

Because every vertex is reachable from~$s$, the constraint
system is feasible if and only if~$H$ has no negative-weight cycle.  
Indeed, if $H$ contains a negative-weight cycle, then summing the corresponding 
inequalities around the cycle gives an impossibility of the form~$y_p\leq y_p+C$ 
for some~$C < 0$. Hence no feasible vector~$\mathbf{y}$, and no feasible allocation 
vector~$\mathbf{x}$, exists.  
Conversely, if there is no negative cycle, let~$d_i$ be the
shortest-path distance from~$s$ to~$i$.  The shortest-path inequalities give
$d_i\leq d_j-c_{ij}$ for every edge~$(j,i)$ and~$d_i\leq 0$ for every direct
edge~$(s,i)$.  Therefore 
\[
x_i^*:=-d_i 
\]
is nonnegative and satisfies
$x_i^*-x_j^*\geq c_{ij}$ for every required ordered pair.

We next prove the key supply claim.  Let $\mathbf{x}$ be any feasible
vector for the lower-bound system and put $y_i=-x_i$.  Along every directed
path from~$s$ to~$i$, repeated application of the corresponding upper-bound
constraints shows that~$y_i$ is at most the path's weight.  It is therefore at
most the minimum such weight: $y_i\leq d_i$.  Consequently,
$x_i\geq-d_i=x_i^*$ for every~$i$.  Thus, $\mathbf{x}^*$ is the componentwise least
feasible extension and, in particular, minimizes $\sum_{i=1}^n x_i$.  An extension
within supply exists exactly when $\sum_{i=1}^n x_i^*\leq K$.

All $c_{ij}$ and all edge weights are integers, so the returned
vector~$\mathbf{x}^*$ is integral.  The graph has $n+1$ vertices and~$O(n^2)$ edges.
Bellman--Ford shortest-path algorithm takes $O(n^3)$ arithmetic operations~\cite{Ford56,Bellman58}. 
The numbers manipulated have polynomial bit length.  Construction and the final supply check take
$O(n^2)$ additional time.  Algorithm~\ref{alg:single-type-eeag-bf} below
summarizes the procedure.
\end{proof}

\subsection{Single-Type Algorithm}
\label{sec:single-type-algorithm}

\begin{algorithm}[htbp]
\caption{Single-Type EEAG via Bellman--Ford}
\label{alg:single-type-eeag-bf}
\small
\DontPrintSemicolon
\KwIn{Values $U_i^{\text{self}}$ and $U_i^{(j)}$ for all
$i,j\in[n]$, per-copy values $v_i\in\mathbb Z_{\geq 0}$, and supply~$K$.}
\KwOut{An envy-resolving vector $\mathbf{x}$, or \textsc{No}.}

\ForEach{ordered pair $(i,j)\in[n]\times[n]$}{
    $\Delta_{ij}\gets U_i^{(j)}-U_i^{\text{self}}$\;
    \If{$v_i=0$ and $\Delta_{ij}>0$}{
        \Return \textsc{No}\;
    }
}
Construct $H$ with vertex set $\{s,1,2,\ldots,n\}$\;
\ForEach{$i\in[n]$}{
    Add edge $(s,i)$ of weight~$0$\;
}
\ForEach{ordered pair $(i,j)$ with~$v_i>0$}{
    $c_{ij}\gets\lceil\Delta_{ij}/v_i\rceil$\;
    Add edge $(j,i)$ of weight $-c_{ij}$\;
}
Run Bellman--Ford from $s$ and let $d_i$ be the distance to~$i$\;
\If{$H$ contains a negative-weight cycle}{
    \Return \textsc{No}\;
}
\ForEach{$i\in[n]$}{
    $x_i^*\gets-d_i$\;
}
\eIf{$\sum_{i=1}^n x_i^*>K$}{
    \Return \textsc{No}\;
}{
    \Return $\mathbf{x}^*=(x_1^*,x_2^*,\ldots,x_n^*)$\;
}
\end{algorithm}

\paragraph{Scope of the argument.}
The algorithm uses the original allocation only through the bundle values
$U_i^{\text{self}}$ and~$U_i^{(j)}$. Hence, it does not require the
valuations over the original goods to be additive. By contrast, the utility
from the added copies must be additive. If agent~$a_j$ receives~$x_j$ copies
of~$g$, then agent~$a_i$ must assign them the additional value~$v_i x_j$.
This assumption is essential because it allows each envy constraint to be
written as
\[
v_i(x_i-x_j)\geq \Delta_{ij},
\]
and therefore as a difference constraint when~$v_i>0$.

\section{Two-Type Hardness and the Type-Count Dichotomy}
\label{sec:EEAG-TwoTypes}

We now close the type-count boundary for binary additive valuations.  Bentert
et al.~\shortcite{BentertEtAl25} prove the \textsf{NP}-hardness with three additional item
types and ask whether two types suffice.  The following theorem answers this
question affirmatively, even when both types have positive finite supply and
there is no separate budget bound.

\begin{theorem}
\label{thm:two-type-hardness}
Bounded-supply EEAG is \textsf{NP}-complete even under all of the following
restrictions: 
\begin{enumerate}
    \item there are exactly two additional item types both with finite
    positive supply;
    \item all valuations, for initial and additional goods, are binary and
    additive; and 
    \item there is no explicit budget bound on the size of the extension.
\end{enumerate}
Moreover, all numerical quantities in the reduction are
polynomially bounded.
\end{theorem}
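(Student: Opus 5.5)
The plan is to prove membership in \textsf{NP} directly and hardness by a polynomial reduction from \textsc{Clique}, refining the three-type construction of Bentert et al.~\shortcite{BentertEtAl25}. Membership is routine. Because the binary valuations bound every agent's value for any bundle by a polynomial, we may assume $\rho(a_i,r)\le s(r)$, so an extension is a polynomial-size certificate. It can be checked against the supply bounds and all $n^2$ envy inequalities in polynomial time.

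For hardness, let the \textsc{Clique} instance be $(G=(V,E),k)$ with $k\ge 3$. I would use two types, which I call $\alpha$ and $\beta$ only inside this proof:
\begin{itemize}
  \item $\alpha$ is the merged ``selection'' type, with supply $k+\binom{k}{2}$;
  \item $\beta$ is the common ``compensation'' type, with supply $|E|-\binom{k}{2}+|V|-k$ or a close variant.
\end{itemize}
The agents I plan to create are:
\begin{itemize}
  \item one agent per vertex and one agent per edge;
  \item padding and dummy agents, used to calibrate initial envies through initial goods with binary values, and to force nonzero totals.
\end{itemize}
I would arrange the approvals so that the approvers of $\alpha$ form a subset of the approvers of $\beta$; the conditions on the gadget agents must be chosen with this containment in mind.

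The gadgets should enforce three properties.
\begin{enumerate}
  \item Every vertex and edge agent initially suffers exactly one unit of envy. Hence it must receive at least one copy of some type it approves.
  \item A vertex agent $u$ approves $\alpha$ and initially values each incident edge agent's bundle exactly as much as its own. Therefore, if an edge agent $e\ni u$ receives an $\alpha$ copy, $u$ must also receive an $\alpha$ copy, since a $\beta$ copy does not help because $\beta$ is shared.
  \item Giving $\beta$ to a vertex or edge agent is always locally safe, but $\beta$ is scarce.
\end{enumerate}
I would verify each of these by writing out the finitely many envy inequalities per gadget, using binary additivity.

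The correctness argument then runs as follows. For the forward direction, a $k$-clique $W$ with edge set $S$ receives $\alpha$ and everyone else receives $\beta$. The supplies match exactly, and no new envy arises because of property 2 and the calibrated ties. For the backward direction, let $S$ be the set of edge agents and $W$ the set of vertex agents holding $\alpha$. Scarcity of $\beta$ forces the number of $\alpha$-holders to be large, namely $|S|+|W|\ge \binom{k}{2}+k$. Property 2 gives $V(S)\subseteq W$, hence $|S|\le\binom{|W|}{2}$. The $\alpha$ supply gives $|S|+|W|\le k+\binom{k}{2}$.

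This coupled edge--endpoint inequality is the step I expect to be the main obstacle. The unified supply no longer separately caps $|W|\le k$, so I must show that $|S|+|W|=k+\binom{k}{2}$ together with $|S|\le\binom{|W|}{2}$ forces $|W|=k$ and $|S|=\binom{k}{2}$. Large $|W|$ is the problematic case. If $|W|>k$, then $|S|<\binom{k}{2}$, and I must rule this out, likely by making $\beta$ unusable for edge agents beyond a count. My first attempt would be to split the scarcity constraint so that vertex agents and edge agents compete for $\beta$ with different multiplicities, for example by making edge agents need two units, which yields a weighted inequality whose only tight solution is $|W|=k$. Once $|W|=k$ and $|S|=\binom{k}{2}$ with $V(S)\subseteq W$, $W$ is a clique. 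All quantities used in the construction are polynomial in $|V|+|E|$.
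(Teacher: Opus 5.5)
\textbf{There is a genuine gap.} It lies where you flagged your main obstacle. Its cause is two design choices, not a missing counting trick.

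\emph{Property~2 is false as you set it up.} Every vertex agent starts envious, and under your containment it approves $\beta$. So $u$ must receive one approved copy anyway. A single copy of either type raises $u$'s own value by one. That exactly matches the rise in $u$'s value for an incident edge agent $e$ that holds an $\alpha$ copy. Hence an $\alpha$ on $e\ni u$ imposes no extra requirement on $u$, and $V(S)\subseteq W$ does not follow. Nothing in your gadgets distinguishes $\alpha$ from $\beta$ for a vertex agent.

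\emph{The counting fails even if you grant $V(S)\subseteq W$.} Vertex and edge agents both consume $\beta$, so scarcity of $\beta$ only bounds $|S|+|W|$ from below. The conditions $|S|+|W|=k+\binom{k}{2}$ and $|S|\le\binom{|W|}{2}$ are met, for example, by $|W|=k+1$ and $|S|=\binom{k}{2}-1$, with no clique present. Your proposed repair, making edge agents need two units, is not worked out. It is also unclear whether it can be realized with binary values while keeping the vertex gadget sound.

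\textbf{The missing idea is to decouple the two counts.} This is what the paper does.
\begin{itemize}
\item Vertex agents start envy-free. Agent $a_v$ owns a good only it approves. Each edge agent $a_e$ owns a good approved exactly by its two endpoints, so $a_v$ is tied with every incident edge agent.
\item The only initial envy runs from each edge agent toward a distinguished agent $b$. Agent $b$ owns a good approved by $b$ and by all edge agents.
\item The selection type $r$ is approved by vertex and edge agents. The compensation type $q$ is approved by edge agents only. The containment therefore runs opposite to yours: the approvers of $q$ are contained in the approvers of $r$.
\item The supplies are $s(r)=\binom{\ell}{2}+\ell$ and $s(q)=|E|-\binom{\ell}{2}$. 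The graph is first padded with disjoint copies of $K_2$ so that $s(q)\ge 1$.
\end{itemize}

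\textbf{How the backward direction then closes.}
\begin{enumerate}
\item Scarcity of $q$ alone gives $|E_r|\ge\binom{\ell}{2}$.
\item A vertex agent incident to an $r$-holding edge agent can be compensated only by $r$. Together with the $r$-supply this gives $|E_r|+|W|\le\binom{\ell}{2}+\ell$, hence $|W|\le\ell$.
\item Then $\binom{\ell}{2}\le|E_r|\le\binom{|W|}{2}\le\binom{\ell}{2}$, which forces $W$ to be an $\ell$-clique.
\end{enumerate}
The large-$|W|$ case you worried about cannot arise, because vertex agents neither need nor can use the compensation type.
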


\begin{proof}
For membership in \textsf{NP}, consider an arbitrary instance satisfying the
stated restrictions, and let $R=\{r,q\}$ denote its two additional item types 
with bounded supply $s(r)$, $s(q)$, respectively. 
A certificate specifies a nonnegative
integer count $\rho(a,t)$ for every agent $a\in\mathcal A$ and every
type $t\in R$. Since every feasible count satisfies
$0\leq\rho(a,t)\leq s(t)$, each count has binary length
$O(\log(s(t)+1))$, and the complete certificate has polynomial length.

For each $t\in R$, we first verify the supply constraint
\[
\sum_{a\in\mathcal A}\rho(a,t)\leq s(t).
\]
We then verify envy-freeness by checking, for every ordered pair of agents
$(a,a')$, that
\[
\begin{aligned}
u_a(\sigma(a))+\sum_{t\in R}\rho(a,t)u_a(t)
&\geq u_a(\sigma(a')) +\sum_{t\in R}\rho(a',t)u_a(t).
\end{aligned}
\]
There are two supply constraints and quadratically many envy inequalities,
and every calculation involves explicitly encoded integers of polynomial
bit length. Thus, feasibility and envy-freeness can be verified in
polynomial time.

For hardness, we reduce from \textsc{Clique} restricted to instances with
$3\leq\ell\leq |V|$, which remains \textsf{NP}-complete~\cite{Karp72}.  Let
$(G=(V,E),\ell)$ be such an instance, and put $M:=\binom{\ell}{2}$.
We first ensure that a supply defined below is positive.  If $|E|<M+1$, add
$M+1-|E|$ vertex-disjoint copies of $K_2$.  More precisely, for each copy,
introduce two fresh vertices joined by a single edge; add no edge between a
fresh vertex and an original vertex or between vertices belonging to distinct
added copies.  Call the resulting graph $G^+=(V^+,E^+)$.  Otherwise, set
$G^+=G$.  Since $\ell\geq3$, none of the added two-vertex components contains
or intersects an $\ell$-clique.  Hence $G^+$ contains an $\ell$-clique if and
only if $G$ does.  Relabel $G^+$ as $G=(V,E)$ for the remainder of the proof;
we now have $|E|\geq M+1$.

Create one vertex agent $a_v$ for each $v\in V$, one edge agent $a_e$ for
each $e\in E$, and one distinguished agent~$b$.  Construct the fixed initial
allocation from the following binary-valued goods:
\begin{itemize}
    \item agent $b$ receives a good $p_b$ approved by~$b$ and every edge
    agent, and by no vertex agent;
    \item each vertex agent $a_v$ receives a good~$p_v$ approved only by
    $a_v$; and
    \item for every edge $e=\{u,v\}$, edge agent~$a_e$ receives a good~$p_e$
    approved exactly by~$a_u$ and~$a_v$.
\end{itemize}
In this initial allocation, the only envy is from edge agents to~$b$. Indeed, 
every~$a_e$ values its own bundle at zero and $b$'s bundle at one. We observe 
that~$b$ values only its own good, each vertex agent values its own bundle and every
incident edge-agent bundle equally at one, and edge agents value all bundles
other than $b$'s at zero.  Thus any envy-resolving extension must give every
edge agent at least one approved additional good.

There are two additional item types, denoted by~$r$ and~$q$.  Their approval
patterns and supplies are listed specifically below:
\begin{center}
\begin{tabular}{@{}lcc@{}}
\toprule
Agent class & $r$ & $q$ \\
\midrule
vertex agent $a_v$ & $1$ & $0$ \\
edge agent $a_e$ & $1$ & $1$ \\
distinguished agent $b$ & $0$ & $0$ \\
\bottomrule
\end{tabular}
\qquad
$\begin{aligned}
s(r)&=M+\ell,\\
s(q)&=|E|-M.
\end{aligned}$
\end{center}
By the preceding padding step, $|E|\geq M+1$, and hence
$s(q)=|E|-M\geq1$.  Moreover, $s(r)=M+\ell>0$.  Thus both additional item
types have strictly positive, polynomially bounded supply.  Notice also that
every agent approving $q$ approves~$r$.

Suppose first that $K\subseteq V$ is a clique of size~$\ell$, and let
\[
E_K:=\bigl\{\{u,v\}\in E:u,v\in K\bigr\}
\]
be its set of~$M$ internal edges.  Allocate one copy of~$r$ to~$a_e$ for every
$e\in E_K$, one copy of~$q$ to~$a_e$ for every~$e\in E\setminus E_K$, and one
copy of~$r$ to~$a_v$ for every~$v\in K$.  This uses exactly~$M+\ell$ copies
of~$r$ and $|E|-M$ copies of~$q$.

We check envy-freeness by agent class.  Every edge agent values its own final
bundle at one, $b$'s bundle at one, every other edge-agent bundle at one, and
every vertex-agent bundle at most one.  Hence no edge agent envies.  
Agent~$b$ values only its own initial good and therefore does not envy.  A vertex
agent~$a_v$ values its own initial bundle at one and values the initial bundle
of an edge agent~$a_e$ at one precisely when~$v$ is incident to~$e$.  If
$e\in E_K$ is incident to~$v$, then~$v\in K$. Both $a_v$'s own bundle and
$a_e$'s bundle have value two for~$a_v$.  Every other bundle has value at most
$a_v$'s own value.  Thus the extension is envy-resolving. 

Conversely, suppose that an envy-resolving extension~$\rho$ exists.  Every
edge agent must receive at least one copy of $r$ or~$q$.  Let
\[
E_r:=\{e\in E:\rho(a_e,r)\geq 1\},
\quad k:=|E_r|.
\]
Only $|E|-M$ copies of~$q$ are available, that is, $\bigl|\{e\in E : \rho(a_e,q)\geq 1\}\bigr|
  \leq s(q) = |E|-M$.  If fewer than $M$ edge agents
received $r$, more than $|E|-M$ edge agents would require a copy of~$q$.
Consequently,
\begin{equation}
\label{eq:two-type-k-lower}
k\geq M.
\end{equation}

Let $W\subseteq V$ be the set of vertices incident to at least one edge in~$E_r$.  
For every $v\in W$, choose an incident edge~$e\in E_r$.  Agent~$a_v$
values~$a_e$'s initial good at one and the copy of~$r$ held by~$a_e$ at one,
so it values~$a_e$'s final bundle at least two.  Since~$a_v$ values its own
initial bundle at one and does not value~$q$, avoiding envy requires
$\rho(a_v,r)\geq 1$.  Thus at least~$k$ copies of~$r$ go to the edge agents 
in~$E_r$, and at least~$|W|$ further copies go to their incident vertex agents.
The supply of~$r$ gives
\begin{equation}
\label{eq:two-type-supply-count}
k+|W|\leq M+\ell.
\end{equation}
By~\eqref{eq:two-type-k-lower} and~\eqref{eq:two-type-supply-count},
$|W|\leq\ell$.  By the definition of~$W$, every edge $\{u,v\}\in E_r$
satisfies~$u,v\in W$.  Since $G$ is simple and~$k=|E_r|$, it follows that
\[
M\leq k\leq\binom{|W|}{2}\leq\binom{\ell}{2}=M.
\]
All inequalities become equalities.  Hence $k=M$, $|W|=\ell$, and~$E_r$ consists
of all~$M$ possible edges on~$W$.  Therefore, $W$ is an $\ell$-clique of the
padded graph and hence of the original graph.  This proves the reduction.

The constructed supplies are~$O(|V|^2+|E|)$, while every valuation is binary.
Thus the hardness also persists when all numerical quantities are encoded in
unary.
\end{proof}

\paragraph{Why the merge works.}
The three-type reduction of Bentert et al.~\shortcite{BentertEtAl25} uses a
common type for selected edges, an edge-only type for unselected edges, and a
vertex-only type for selected vertices.  We merge the selected-edge and
selected-vertex types into~$r$ and add their supplies.  An edge agent then also
values copies of $r$ assigned to selected vertices, but this creates no new
envy, since every edge agent receives one approved additional good and values both
bundles at one.  In the reverse direction, separate per-type counting is
replaced by
\[
|E_r|+|W|\leq\binom{\ell}{2}+\ell.
\]
The supply bound $s(q)=|E|-M$ implies that at most $|E|-M$ of the $|E|$ edge agents 
can receive a copy of~$q$. Since each edge agent must receive~$r$ or~$q$ to resolve 
envy, it forces that at least $|E|-(|E|-M) = M=\binom{\ell}{2}$ edge agents to 
receive~$r$, and hence
$|E_r|\geq\binom{\ell}{2}$. However, a simple graph on at most~$\ell$ incident 
vertices has at most $\binom{\ell}{2}$ edges.  Equality
therefore forces a clique.

Combining Theorem~\ref{thm:two-type-hardness} with the single-type algorithm
gives the promised sharp boundary.

\begin{corollary}
\label{cor:type-count-dichotomy}
For bounded-supply EEAG with binary additive valuations and no explicit budget
bound, the problem is polynomial-time solvable with one additional item type
and \textsf{NP}-complete with exactly two additional item types.
\end{corollary}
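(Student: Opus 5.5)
The corollary splits into two halves, and I plan to obtain each from a result already proved in the paper.

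\emph{One type (polynomial).} Take an arbitrary instance with binary additive valuations, one additional type $g$ of finite supply $K$, and no budget. I first rewrite it in the form of Definition~\ref{def:single-type}. Set $U_i^{\text{self}}:=u_i(\sigma(a_i))$ and $U_i^{(j)}:=u_i(\sigma(a_j))$, both computable in polynomial time from the additive binary valuations, and set $v_i:=u_i(g)\in\{0,1\}$. The instance then fits the binary warm-up, so Theorem~\ref{thm:binary-case} (or equally Theorem~\ref{thm:one-item-type-general-value}) decides it in polynomial time.

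Two edge cases need a short remark. If the supply is infinite, the instance is not bounded-supply; if one wishes to cover it anyway, the Bellman--Ford argument without the cap check handles it (or one cites Bentert et al.). If $K=0$, the procedure still applies, since $\sum_i x_i^*\le 0$ forces $x^*=0$.

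\emph{Two types (NP-complete).} This is exactly Theorem~\ref{thm:two-type-hardness}. Its restrictions are binary additive valuations, exactly two types, and no explicit budget, and these match the hypotheses of the corollary. That theorem also supplies membership in \textsf{NP} for arbitrary instances with two binary types, so nothing further is needed.

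The only point requiring care is to confirm that the one-type algorithm's input model, which accesses the original allocation only through the matrix of bundle values, is indeed obtainable from the EEAG input in polynomial time and that binary per-copy values fall within its scope. Both are immediate from additivity. I therefore expect no real obstacle, and the corollary is a direct combination of the two theorems.
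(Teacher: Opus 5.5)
Your proposal is correct and matches the paper's proof, which likewise obtains the one-type half from the single-type algorithm (the paper cites Theorem~\ref{thm:one-item-type-general-value}; Theorem~\ref{thm:binary-case} works equally well) and the two-type half from Theorem~\ref{thm:two-type-hardness}. One minor caveat, which the paper shares: that theorem's membership argument bounds each count by $s(t)$, so it covers only instances in which both supplies are finite, whereas a bounded-supply two-type instance may have one unbounded type; in that case \textsf{NP} membership needs a standard small-solution bound for integer programs, so your claim that the theorem already supplies membership ``for arbitrary instances with two binary types'' is slightly too strong.
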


\begin{proof}
The one-type case follows from
Theorem~\ref{thm:one-item-type-general-value}, which permits arbitrary
nonnegative integer per-copy values.  The two-type case follows from
Theorem~\ref{thm:two-type-hardness}.
\end{proof}

\paragraph{Remark.} Corollary~\ref{cor:type-count-dichotomy} does not subsume the two-agent
hardness result of Sect.~\ref{sec:EEAG-TwoAgents}.  The dichotomy fixes the
number of types and allows the number of agents to grow, whereas
Theorem~\ref{thm:EF-init-NPC} fixes the number of agents at two and allows the
number of unit-supply types to grow.

\section{Conclusion}
\label{sec:conclusion}

We established a sharp type-count dichotomy for bounded-supply EEAG under
binary additive valuations and no explicit budget bound.  With one additional
item type, the envy inequalities admit a componentwise least solution
computable by shortest paths, so a finite supply cap can be checked in
polynomial time for any number of agents.  With exactly two additional item
types, the problem is \textsf{NP}-complete, even when both supplies are positive
and the approval set of one type is contained in that of the other.  This
closes the two-type question posed by Bentert et al.~\shortcite{BentertEtAl25}.
Orthogonally, exact balancing yields weak \textsf{NP}-completeness even for two
agents with identical valuations when the number of one-copy types may grow.

These results leave several refinements of the two-agent case to investigate.
In particular, it would be interesting to determine whether the weak
\textsf{NP}-completeness result can be strengthened under natural valuation
restrictions, or whether broader two-agent settings admit pseudo-polynomial
algorithms. Another direction is to identify broader valuation or intervention
structures for which the envy inequalities retain a tractable
difference-constraint formulation.

\bibliographystyle{named}
\bibliography{EEAG}

\end{document}